\documentclass{llncs}
\usepackage{amssymb,enumitem,interval,xcolor,comment}
\usepackage{tuples,fixpoints,calculus,strut,galois,eqntabular,renumber,calculus}
\usepackage{stmaryrd}
\newcommand{\sqb}[1]{\llbracket#1\rrbracket}
\usepackage[hidelinks]{hyperref}
\begin{document}
\title{Nonstandard Axiomatic Semantics}
\author{Patrick Cousot}
\institute{Courant Institute, New York University}
\maketitle
\begin{abstract}
Similar to Skolem's nonstandard models of Peano's naturals, we show that axiomatic semantics based on Hoare logic has nonstandard models and so does not specify a unique, well-defined, and formal operational semantics of programming languages. We propose to enrich axiomatic semantics with additional proof obligations to solve this ambiguity problem. These proof obligations are always satisfied for standard trace models so that Hoare logic proofs are unchanged for these standard models.
\end{abstract}
\begin{keywords}
Axiomatic semantics, Hoare logic, Trace semantics, Nonstandard semantics, Semantics standardisation, Abstract Interpretation.
\end{keywords}
\section{Introduction}
In addition to Peano's axiomatization \cite{Peano-1889}  of the naturals,
Dedekind's categoricity \cite{Dedekind-1888} eliminates Skolem's nonstandard models of arithmetic \cite{Skolem34} thus ensuring that the definition of the naturals is unique   and corresponds to the intended interpretation of the naturals, up to an isomorphism (section \ref{sec:NonstandardModelsArithmetics}). 

Since Hoare logic (section \ref{sec:Hoare:Logic}) is an abstraction (section \ref{sec:Hoare:Logic:Standard:Interpretation}) of a finite or infinite trace semantics indexed by the naturals \cite{DBLP:journals/tcs/Cousot02,DBLP:journals/pacmpl/Cousot24} (section \ref{sect:standard:trace:semantics}), an inevitable question is whether the logic is expressive enough to uniquely characterize this trace semantics, up to an isomorphism. This would show that Hoare logic is an operational characterization of program executions. On the contrary, we show that Lauer's
axiomatic semantics \cite{DBLP:phd/ethos/Lauer71,DBLP:journals/acta/HoareL74} based on Hoare logic \cite{DBLP:journals/cacm/Hoare69} suffers from an ambiguity problem similar to that of Peano arithmetics.
The rule for iteration (using an inductive invariant that must hold for all states of infinite executions) does not uniquely define the intended meaning of terminating and nonterminating loops (up to isomorphism) since many different infinite pasts and futures (section \ref{sec:nonstandard:Traces}) are possible without changing the logic theory (section \ref{sec:nonstandard:Trace:Semantics}). We study this phenomenon (sections \ref{sec:Nonstandard:Interpretation} and \ref{sec:Ambiguity:Axiomatic:Semantics}) and propose a remedy (section \ref{sect:logic:standardisation}) inspired by  termination proof methods (section \ref{sec:Termination:Proof:Methods}) extended so as to allow for non-termination. The additional requirements of the axiomatic semantics rule out nonstandard semantics and are always satisfied by the standard semantics (section \ref{sec:StandardHoareLogics}). It follows that Hoare logic proofs are unchanged under the hypothesis that Hoare logic  should be interpreted with respect to the standard semantics.

\section{Standard Trace Semantics of an Imperative Language}\label{sect:standard:trace:semantics}
We consider the standard trace semantics \cite{DBLP:conf/praapr/Wegner72} of a simple imperative language with iteration $\mathcal{L}$ (e.g.\ \cite[Ch.~7]{DBLP:books/mit/C2021}).  We let $\ell,\ell',\ldots \in\mathbb{L}$ be a set of labels/program points, $\texttt{x},\texttt{y},\ldots\in\mathbb{X}$ be a set of variables, $v, 0, \textit{true},\ldots\in\mathbb{V}$ be a set of values,  $\rho,\rho',\ldots\in\mathbb{E}\triangleq\mathbb{X}\rightarrow\mathbb{V}$ be a set of environments, and $\pair{\ell}{\rho},\ldots\in\Sigma\triangleq\mathbb{L}\times\mathbb{E}$ be a set of states. Executions $\sigma,\sigma',\sigma_b,\ldots\in \Sigma^{\ast\infty}\triangleq\Sigma^{\ast}\cup\Sigma^{\infty}$ are modeled by finite traces $\Sigma^{\ast}\triangleq\bigcup_{n\in\mathbb{N}}\:[0,n]\rightarrow\Sigma$ and infinite traces $\Sigma^{\infty}\triangleq\mathbb{N}\rightarrow\Sigma$. Trace concatenation is by juxtaposition of states and traces. A program semantics is a set of traces in $\wp(\Sigma^{\ast\infty})$.

The iteration statement \texttt{S} = \texttt{while${}^{\ell}\:$B\hskip0.5ex do$\:{}^{\ell'}$Sb${\:\:}^{\ell''}$} of $\mathcal{L}$ has unique labels $\ell\in\mathbb{L}$ at \texttt{S} and after the loop body \texttt{Sb}, $\ell'\in\mathbb{L}\setminus\{\ell\}$ at \texttt{Sb}, and $\ell''\in\mathbb{L}\setminus\{\ell,\ell'\}$ after the iteration \texttt{S}. 
The finite trace semantics $\sqb{\texttt{S}}^{\ast} \in\wp(\Sigma^{\ast})$ is a set of finite traces starting at $\ell$ and finishing at $\ell''$. The infinite trace semantics $\sqb{\texttt{S}}^{\infty}\in\wp(\Sigma^{\infty})$ is a set of infinite traces starting at $\ell$ that never terminate. These
semantics  may be empty (but not both). They are traditionally defined by structural induction (i.e.\ on the program syntax). Let $\sqb{\texttt{Sb}}^{\ast}$ (respectively $\sqb{\texttt{Sb}}^{\infty}$) be the standard finite (resp.\ infinite) trace semantics of the loop body \texttt{Sb}, already defined by structural induction. The finite executions in $\sqb{\texttt{Sb}}^{\ast}$ have the form $\pair{\ell'}{\rho}\sigma_b\pair{\ell}{\rho'}$ since they start at \texttt{Sb} that is $\ell'$ and terminate after \texttt{Sb} that is $\ell$. These standard trace semantics are defined using least $\lfp$ and greatest $\gfp$  fixpoints of $\subseteq$-increasing functions $B^{\ast}\sqb{\texttt{S}}$ and $F^{\infty}\sqb{\texttt{S}}$ on the complete lattice $\pair{\wp(\Sigma^{\ast\infty})}{\subseteq}$ \cite{Tarski1955}, as follows (in these formulas, $\ell,\ell',\ell''\in\mathbb{L}$ are labels of a statement \texttt{S}, not local dummy variables of the formulas. So we write $\{\ell\}$ instead of the more rigorous $\{x\mid x=\ell\}$ where  the dummy variable $x$ is local to the formula and $\ell$ must be defined globally to be a label of a statement \texttt{S}).
\begin{eqntabular}[fl]{rcl}
B^{\ast}\sqb{\texttt{S}}X&\triangleq&\{\pair{\ell}{\rho}\pair{\ell''}{\rho}\mid\neg\mathcal{B}\sqb{\texttt{B}}\rho\}\cup
\{\pair{\ell}{\rho}\pair{\ell'}{\rho}\sigma_b\pair{\ell}{\rho'}\sigma\mid 
\mathcal{B}\sqb{\texttt{B}}\rho\wedge{}\nonumber\\
&&\qquad\pair{\ell'}{\rho}\sigma_b\pair{\ell}{\rho'}\in \sqb{\texttt{Sb}}^{\ast}\wedge\pair{\ell}{\rho'}\sigma \in X\}\label{eq:def:Bast}\\
F^{\infty}\sqb{\texttt{S}}X&\triangleq&\{\pair{\ell}{\rho}\pair{\ell'}{\rho}\sigma_b\pair{\ell}{\rho'}\sigma\mid \mathcal{B}\sqb{\texttt{B}}\rho\wedge
\pair{\ell'}{\rho}\sigma_b\pair{\ell}{\rho'}\in \sqb{\texttt{Sb}}^{\ast}\wedge{}\label{eq:def:Foo}\\
&&{}
\phantom{\{}\pair{\ell}{\rho'}\sigma\in X\}{}\cup{}\{\pair{\ell}{\rho}\pair{\ell'}{\rho}\sigma_b\mid \mathcal{B}\sqb{\texttt{B}}\rho\wedge
\pair{\ell'}{\rho}\sigma_b\in \sqb{\texttt{Sb}}^{\infty}\}\nonumber\\
\sqb{\texttt{S}}^{\ast}&\triangleq&\Lfp{\subseteq}B^{\ast}\sqb{\texttt{S}}\colsep{=}\Gfp{\subseteq}B^{\ast}\sqb{\texttt{S}}
\quad~
\sqb{\texttt{S}}^{\infty}\colsep{\triangleq}\Gfp{\subseteq}F^{\infty}\sqb{\texttt{S}}
\quad~
\sqb{\texttt{S}}^{\ast\infty}\colsep{\triangleq}\rlap{$\sqb{\texttt{S}}^{\ast}\cup\sqb{\texttt{S}}^{\infty}$ .}\nonumber
\end{eqntabular}
(\ref{eq:def:Bast}) reflects the fact that \texttt{while B do Sb} is equivalent to
\texttt{if B then (S; while B do Sb)}. (\ref{eq:def:Foo}) reflects the fact that a loop does not terminate either because \texttt{B} is always true after zero or more terminating executions of its body or because after a number of iterations, the loop body does not terminate. The fixpoint equality $\Lfp{\subseteq}B^{\ast}\sqb{\texttt{S}}=\Gfp{\subseteq}B^{\ast}\sqb{\texttt{S}}$ follows from \cite[Theorem 11]{DBLP:journals/tcs/Cousot02}.

In examples with only one variable like \texttt{x:=0; while${}^{\ell}$ true do ${}^{\ell'}$ x:=x+1; ${}^{\ell''}$} we simplify the trace by abstracting states $\pair{\ell'}{\rho}$ by 
$\rho(\texttt{x})$ when $\ell'\in\{\ell,\ell''\}$ and ignoring the other states in the trace e.g. $0\:1\:2\:\:\ldots$.

By defining the computational ordering $X\sqsubseteq Y \triangleq ((X\cap\Sigma^{*})\subseteq(Y\cap\Sigma^{*}))\wedge ((X\cap\Sigma^{\infty})\supseteq(Y\cap\Sigma^{\infty}))$, $\pair{\wp(\Sigma^{\ast\infty})}{\sqsubseteq}$ is a complete lattice, $B^{\ast}\sqb{\texttt{S}}$ and $F^{\infty}\sqb{\texttt{S}}$ are $\sqsubseteq$-increasing, and we have $\sqb{\texttt{S}}^{\ast\infty}=\Lfp{\sqsubseteq}\LAMBDA{X}B^{\ast}\sqb{\texttt{S}}(X\cap\Sigma^{*})\cup F^{\infty}\sqb{\texttt{S}}(X\cap\Sigma^{\infty})$  \cite{DBLP:conf/popl/CousotC92}.

\section{Hoare Logic}\label{sec:Hoare:Logic}
Hoare logic \cite{DBLP:journals/cacm/Hoare69} uses triples $\{P\}\,S\,\{Q\}$ (originally written $P\,\{S\}\, Q$ in \cite{DBLP:journals/cacm/Hoare69}) to specify that any execution of statement $S$ from a state in $P$, which terminates (if ever), will terminate in a state satisfying $Q$. $P$ and $Q$ are assertions on (the values of)  program variables. Auxiliary variables not appearing in \texttt{S} can be introduced to record the initial value of variables in $P$ and express a relation between initial and final values in $Q$.

To avoid the inexpressivity problems of Hoare logic related to the choice of a logic to specify $P$ (or $Q$) \cite{DBLP:journals/siamcomp/Cook78,DBLP:journals/siamcomp/Cook81}, we assume (unless specified otherwise) that predicates are defined in extension as the set of all states satisfying $P$ (or $Q$).
Hoare defined the semantics of a simple imperative programming language by the following rules, which are literally extracted from \cite{DBLP:journals/cacm/Hoare69}, except that predicates are defined in extension as sets of environments $\rho$ mapping variables \texttt{x} to their value $\rho(\texttt{x})$ satisfying these predicates. We define the semantic assignment $\rho[\texttt{x}\leftarrow v]$ as $\rho[\texttt{x}\leftarrow v](\texttt{x})=v$ and $\rho[\texttt{x}\leftarrow v](\texttt{y})=\rho(\texttt{y})$ when $\texttt{y}\neq {x}$. We assume expressions \texttt{f} and $B$ are well-formed first order logic formulas without quantifiers (hence subject to nonstandard interpretations) where free variables are program variables or auxiliary variables.
\begin{quote}
\begin{itemize}
\item[\llap{``}\labelitemi] Axiom of Assignment: $\{P\}\,\texttt{x := f}\,\{\rho[\texttt{x}\leftarrow \sqb{\texttt{f}}(\rho)]\mid \rho\in P{}\}$ where $\sqb{\texttt{f}}(\rho)$ is the  value of expression $f$ when evaluated with values $\rho(\texttt{y})$ of the variables \texttt{y} of \texttt{f};
\item (Left and Right) Rules of Consequence:
\begin{itemize}[topsep=0pt,itemsep=0pt,parsep=0pt]
\item If $\{P\}\,Q\,\{R\}$ and $R\subseteq S$ then $\{P\}\,Q\,\{S\}$ (where logical implication $R\subseteq S$ is written $R\supset S$ in \cite{DBLP:journals/cacm/Hoare69});
\item If $\{P\}\,Q\,\{R\}$ and $S\subseteq P$ then $\{S\}\,Q\,\{R\}$;
\end{itemize}

\item Rule of Composition:\\
If $\{P\}\,Q_1\,\{R_1\}$ and $\{R_1\}\,Q_2\,\{R\}$ then $\{P\}\,Q_1\texttt{;}Q_2\,\{R\}$

\item Rule of Iteration:\\
If $\{P\cap B\}\,S\,\{P\}$ then $\{P\}\,\texttt{while}\,B\,\texttt{do}\,S\,\{\neg B \cap P\}$ where $B$ (respectively $\neg{B}$) stands for $\{\rho\mid{}$the evaluation of Boolean expression \texttt{B} for the values $\rho(\texttt{x})$ of the variables \texttt{x} appearing in \texttt{B} is true$\}$ (respectively false).''

\end{itemize}
\end{quote}
The derived union and intersection rules, which from $\forall i\in\Delta\mathrel{.}\{P_i\}\,\texttt{S}\,\{Q_i\}$ infer $\{\bigcup_{i\in\Delta}P_i\}\,\texttt{S}\,\{\bigcup_{i\in\Delta}Q_i\}$ and $\{\bigcap_{i\in\Delta}P_i\}\,\texttt{S}\,\{\bigcap_{i\in\Delta}Q_i\}$,  can be proved by induction on the structure of programs \cite{DBLP:conf/oopsla/CousotCLB12}.

\begin{remark}\label{rem:Hoare-on-oo-traces}Although Hoare logic is for partial correctness only, whose definition completely ignores infinite behaviors, something is nevertheless said about these infinite behaviors by the loop invariant when the precondition does not exclude infinite behaviors. For example, $P\triangleq\texttt{x>0}$ in $\{P\cap \texttt{x>0}\}\,\texttt{x:=x+1;}\,\{P\}$ for $\{P\}\,\texttt{while x>0}\hskip0.75ex\texttt{do}\,\texttt{x:=x+1;}\,\{\textit{false}\}$ provides the reachability information that infinite iterations all have \texttt{x} strictly positive.
\end{remark}

\section{The Standard Interpretation of Hoare Logic is an Abstraction of the Standard Trace Semantics}\label{sec:Hoare:Logic:Standard:Interpretation}
Define, for all statements $\texttt{S}\in\mathcal{L}$ where $\ell$ is at \texttt{S} and $\ell'$ is after \texttt{S}, the abstraction $\alpha_{\texttt{S}}$\ of a standard trace semantics $T$ into the set of all Hoare triples valid for that semantics as follows  \cite{DBLP:journals/pacmpl/Cousot24}.
\begin{eqntabular*}{rcl}
\alpha_{\texttt{S}}(T)&\triangleq\bigl\{\{P\}\texttt{S}\{Q\}\bigm| P,Q\in\wp(\mathbb{E})\wedge\forall \pair{\ell}{\rho}\sigma\pair{\ell'}{\rho'}\in T\cap\Sigma^{\ast}\mathrel{.}
\rho\in P\Rightarrow \rho'\in Q\bigr\}
\end{eqntabular*}
The abstraction $\alpha_{\texttt{S}}$ defines the standard interpretation of Hoare triples as partial correctness. $\alpha_{\texttt{S}}$\ is the lower adjoint of a Galois connection \cite{DBLP:journals/pacmpl/Cousot24}. Hoare logic theory $\mathcal{H}(\mathcal{L})$ for the language $\mathcal{L}$ is the set
\begin{eqntabular}{rcl}
\mathcal{H}(\mathcal{L})&\triangleq&\bigcup\bigl\{\alpha_{\texttt{S}}(\sqb{\texttt{S}}^{\ast\infty})\bigm|\texttt{S}\in\mathcal{L}\bigr\}
\label{def:standard:Hoare:logic:theory}
\end{eqntabular}
of all Hoare triples of all statements of the language $\mathcal{L}$ \cite{DBLP:journals/pacmpl/Cousot24}. Hoare proof rules follow from this definition by considering the abstraction $\alpha_{\texttt{S}}$\ of the structural fixpoint definition of the standard trace semantics $\sqb{\texttt{S}}^{\ast\infty}$ \cite{DBLP:conf/birthday/Cousot26}.

\section{Nonstandard Models of Arithmetic}\label{sec:NonstandardModelsArithmetics}
Peano \cite{Peano-1889}  defined the naturals $\mathbb{N}$ by $0\in\mathbb{N}$ and $\forall n\in\mathbb{N}\mathrel{.}\mathcal{S}n\in\mathbb{N}$ where $\mathcal{S}n=n+1$  is the successor function, together with $\forall n\in\mathbb{N}\mathrel{.}\mathcal{S}n\neq 0$ and injectivity $\forall n,m\in\mathbb{N}\mathrel{.}n+1=m+1\Rightarrow n=m$.
The intention was to define the naturals $\mathbb{N}$ as $0<1<2<3<\ldots$. Skolem \cite{Skolem34} showed that there are nonstandard models of Peano's definition such as the model $S$ =
$0<1<2<3<\ldots<\ldots-2<-1<0'<1'<2'\ldots$ which obviously satisfies Peano's axioms. Peano's arithmetic is not well-defined since, up to isomorphism, the axiomatization has more than one model.

Dedekind's idea \cite{Dedekind-1888} to ensure the uniqueness of the model (up to an isomorphism)  relies on the chain principle (or second-order induction) that is essentially De Morgan's mathematical induction (proofs by recurrence) \cite{DeMorgan1838}. From $P(0)$ and $\forall n\mathrel{.}P(n)\Rightarrow P(n+1)$ conclude $\forall n\in\mathbb{N}\mathrel{.}P(n)$ (which is also the contrapositive form of Fermat's earlier infinite descent \cite{Fermat1659-descente,DBLP:journals/igpl/Wirth04}, from $\forall n\mathrel{.}\neg P(n+1)\Rightarrow \neg P(n)$ and $P(0)$ conclude $\forall n\in\mathbb{N}\mathrel{.}P(n)$). Dedekind's additional  requirement is that the models of Peano's naturals are those for which proof by recurrence is valid (sound, correct). This eliminates the Skolem's model $S$ since for $P=\{0,1,2,\ldots\}$ (which is a definition  in extension of a predicate where $P(n)$ denotes $n\in P$) the hypotheses of the recurrence principle are satisfied  by $P$ while the conclusion $S\subseteq P$ is not. So the soundness of the principle of proof by recurrence  effectively limits Peano's definition to only those elements generated by iterating the  successor function $\mathcal{S}$ starting from $0$. 

Observe that $\mathbb{N}$ is the smallest model of Peano's naturals since $\mathbb{N}\setminus\{n\}$, $n\in\mathbb{N}$ does not satisfy Peano's definition.
This is Dedekind 's idea of defining $\mathbb{N}$ as the smallest subset of an infinite set such that there exists a transformation (the successor function $\mathcal{S}$)  that is injective and a base element $0$  not in the range of $\mathcal{S}$ \cite[\S71]{Dedekind-1888}. Otherwise stated, the naturals are the least fixpoint
\begin{eqntabular}{c}
\mathbb{N}=\Lfp{\subseteq}F\label{eq:def:nat:lfp}
\end{eqntabular}
of a transformer $F\in\wp(\mathbb{U})\rightarrow\wp(\mathbb{U})$ defined
as $F(X)\triangleq\{0\}\cup\{\mathcal{S}n\mid n\in X\}$  operating on the powerset of a universe $\mathbb{U}$ which can be chosen for example as finite sentences over an alphabet containing $0$, $1$, and $\mathcal{S}$ possibly $+$, $-$, $\times$, and any other required letters. Then the powerset $\pair{\wp(\mathbb{U})}{\subseteq}$ is a complete lattice and $F$ is $\subseteq$-increasing (monotone) so that
the $\subseteq$-least fixpoint $\Lfp{\subseteq}F$ of $F$ exists by Tarski's theorem \cite{Tarski1955} and is unique by definition of ``least''. This definition allows us to recover proof by recurrence thanks to Park's fixpoint induction \cite{Park1970-Fixpoint-Induction} (itself a direct consequence of Tarski's fixpoint theorem \cite{Tarski1955}) as follows ($P\in\wp(\mathbb{U})$)
\begin{calculus}
\formula{\mathbb{N}\subseteq P}\\
$\Leftrightarrow$
\formulaexplanation{\Lfp{\subseteq}F\subseteq P}{fixpoint definition $\mathbb{N}=\Lfp{\subseteq}F$}\\
$\Leftrightarrow$
\formulaexplanation{\exists I\in\wp(\mathbb{U})\mathrel{.} F(I)\subseteq I\wedge I\subseteq P}{fixpoint induction \cite{Park1970-Fixpoint-Induction}}\\
$\Leftrightarrow$
\formulaexplanation{\exists I\in\wp(\mathbb{U})\mathrel{.} \{0\}\cup\{Sn\mid n\in I\}\subseteq I\wedge I\subseteq P}{definition of $F$}\\
$\Leftrightarrow$
\formulaexplanation{\exists I\in\wp(\mathbb{U})\mathrel{.} 0\in I\wedge \forall n\in I\mathrel{.} Sn \in I \wedge I\subseteq P}{definition of $\cup$ and $\subseteq$}
\end{calculus}
This\ulstrut\  is exactly second-order induction also called recurrence principle (except that it is emphasized that $P$ might have to be strengthened to an inductive argument $I$ to be provable by recurrence, a situation which is so common in mathematics that it does not even need to be mentioned in the recurrence principle). We conclude that the fixpoint definition of the naturals $\mathbb{N}$ as ``the smallest set such that $0\in\mathbb{N}$ and $\forall n\in \mathbb{N}\mathrel{.} \mathcal{S}n \in \mathbb{N}$'' is equivalent to (and much simpler than) the mathematical definition requiring ``$0\in\mathbb{N}$, $\forall n\in \mathbb{N}\mathrel{.} \mathcal{S}n \in \mathbb{N}$, and the recurrence principle is valid in $\mathbb{N}$''.

Notice that there are different possible universes but the defined naturals in each case are isomorphic, this is Dedekind's categoricity theorem. Examples are Zermelo naturals $0=\{\}$ and $n+1=\{n\}$ in his axiom of infinity \cite[Axiom VII]{Zermelo-sets-1908} or Von Neumann naturals $0=\{\}$ and $n+1=n\cup \{n\}$ as part of the ordinals \cite{VonNeumann-ordinals-1923}.

\section{Nonstandard Traces of an Imperative Language}\label{sec:nonstandard:Traces}
Let $\mathbb{N}^{+\delta}$ be the standard naturals $\mathbb{N}$ indexed by $0$ followed by $\delta\in\mathbb{N}$ copies of $\mathbb{Z}$ indexed by $i\in[1,\delta]$ totally ordered by the lexicographic ordering on the index and the integers. Informally 
$\mathbb{N}^{+\delta}=0_0\ 1_0\ 2_0\ \ldots\ \ldots -2_1\ -1_1\ 0_1\ 1_1\ 2_1\ldots\quad\ldots\quad\ldots \mbox{$-2_{\delta}$}\allowbreak\ \allowbreak-1_{\delta}\ 0_{\delta}\ 1_{\delta}\ 2_{\delta}\ldots$ 
So $\mathbb{N}^{+0}=\{n_0\mid n\in\mathbb{N}\}$ is isomorphic to $\mathbb{N}$ while $\mathbb{N}^{+1}=\{n_0\mid n\in\mathbb{N}\}\cup \{z_1\mid z\in\mathbb{Z}\}$ is isomorphic to the Skolem model $S$ in section \ref{sec:NonstandardModelsArithmetics}. 

Assuming that states $\pair{\ell}{\rho}$ are abstracted to $\rho(\texttt{x}$) and others ignored, the $\mathbb{N}^{+\delta}$, $\delta\in\mathbb{N}$ are trace models of the program \texttt{while true do x=x+1;} where $\texttt{x}\in\mathbb{N}^{+\delta}$ that starts and never terminates. Another example suggested by a reviewer is ``\texttt{y:=0; while (y==0) do x:=x+1;}, and then
have a trace with Skolem’s nonstandard natural numbers as domain. After each of
the standard points 0, 1, ..., n, we will have y = 0; but on the nonstandard
points -3, -2, -1, 0', 1', 2', ... one could choose y to be an arbitrary
constant > 0.''

Define $\mathbb{N}^{-\delta}\triangleq\{-m_{-i}\mid m_i\in \mathbb{N}^{+\delta}\}$. 
Informally 
$\mathbb{N}^{{-}\delta}=
\ldots {-}2_{{-}\delta}\ {-}1_{{-}\delta}\ 0_{{-}\delta}\ 1_{{-}\delta}\allowbreak\ \allowbreak2_{{-}\delta}\ldots
\quad\ldots\quad
\ldots {-}2_{{-}1}\ {-}1_{{-}1}\ 0_{{-}1}\ 1_{{-}1}\ 2_{{-}1}\ldots\ 
\ldots\ {{-}}2_{0}\ {{-}}1_{0}\ 0_{0}$ .The $\mathbb{N}^{{-}\delta}$, $\delta\in\mathbb{N}$ are trace models of program \texttt{while x!=0 do x=x+1;} where $x\in\mathbb{N}^{{-}\delta}$. 
In addition define $\mathbb{N}^{{-}\delta+\delta'}\triangleq \mathbb{N}^{+\delta}\cup\mathbb{N}^{-\delta'}$ and $\mathbb{N}^{\pm\delta}\triangleq\mathbb{N}^{-\delta+\delta}$. For example $\mathbb{N}^{\pm1}=\{ \ldots\ -2_{-1}\ -1_{-1}\ 0_{-1}\ 1_{-1}\ 2_{-1}\ \ldots\mbox{$-2_0$}$ $\mbox{$-1_0$}\ 0_0\ 1_0\ 2_0\ \ldots\mbox{$-2_1$}\ -1_1\ 0_1\ 1_1\ 2_1\ \ldots\}$.

We define nonstandard infinite traces as $\Sigma^{\widetilde{\infty}}\triangleq\Sigma^{\ast} \cup
(\mathbb{N}^{-\delta}\rightarrow\Sigma)\cup
(\mathbb{N}^{+\delta}\rightarrow\Sigma)\cup
(\mathbb{N}^{\pm\delta}\rightarrow\Sigma)$ as finite traces in $\Sigma^{\ast}$ plus nonstandard backward traces in $\mathbb{N}^{-\delta}\rightarrow\Sigma$ that never start but terminate, nonstandard forward traces in $\mathbb{N}^{+\delta}\rightarrow\Sigma$ that start but never terminate, and nonstandard backward/forward traces in $\mathbb{N}^{\pm\delta}\rightarrow\Sigma$ that neither start nor terminate in the transfinite. 

The parts of a trace indexed by $\mathbb{N}^{+0}$ or $\mathbb{N}^{-0}$ are called the standard parts of the trace and those indexed by $\mathbb{N}^{-\delta}$ or $\mathbb{N}^{+\delta}$, $\delta>0$ are called the nonstandard parts.

\section{Nonstandard Trace Semantics of an Imperative Language}\label{sec:nonstandard:Trace:Semantics}

The nonstandard trace semantics of a simple imperative language is an element of $\wp(\Sigma^{\widetilde{\infty}})$. It can be given fixpoint definitions.
Consider for example the traces that terminate but never start for the iteration statement \texttt{S} = \texttt{while${}^{\ell}\:$B\hskip0.5ex do$\:{}^{\ell'}$Sb~${}^{\ell''}$}. In the following formulas, we assume that $\ell$, $\ell'$, and $\ell''$ are the labels of \texttt{S} (i.e.\ not variables local to the formula).
Let $\textsf{Seed}^{-\delta}\sqb{\texttt{S}}X$ be the subset of nonstandard backward traces of $X$ such that the standard and nonstandard parts all have at least one state
of the form $\pair{\ell}{\rho}$ such that $\mathcal{B}\sqb{\texttt{B}}\rho$ holds.
Define \begin{eqntabular}[fl]{rcl}
B^{-\delta}\sqb{\texttt{S}}X&\triangleq&\{\sigma_1\pair{\ell}{\rho}\pair{\ell'}{\rho}\sigma_2\pair{\ell}{\rho'}\pair{\ell''}{\rho'}\in\textsf{Seed}^{-\delta}\sqb{\texttt{S}}X\mid {}
\label{eq:def:B-delta}\stepcounter{equation}\renumber{(\ref{eq:def:B-delta}.a)}\\
&&\quad\mathcal{B}\sqb{\texttt{B}}\rho\wedge\pair{\ell'}{\rho}\sigma_2\pair{\ell}{\rho'}\in\sqb{\texttt{Sb}}^{\ast}{}\wedge{}\neg\mathcal{B}\sqb{\texttt{B}}\rho'\}{}\cup{}\nonumber\\
&&\{\sigma_1\pair{\ell}{\rho'}\pair{\ell'}{\rho'}\sigma'_1\pair{\ell}{\rho}\pair{\ell'}{\rho}\sigma'_2\pair{\ell}{\rho''}\sigma_2\in\textsf{Seed}^{-\delta}\sqb{\texttt{S}}X\mid{}\renumber{(\ref{eq:def:B-delta}.b)}\\
&&\quad\mathcal{B}\sqb{\texttt{B}}\rho'\wedge{}
\pair{\ell'}{\rho'}\sigma'_1\pair{\ell}{\rho}\in\sqb{\texttt{Sb}}^{\ast}{}\wedge{}\
\mathcal{B}\sqb{\texttt{B}}\rho\wedge{}\nonumber\\
&&\qquad\pair{\ell'}{\rho}\sigma'_2\pair{\ell}{\rho''}\in\sqb{\texttt{Sb}}^{\ast}{}\wedge{}
\mathcal{B}\sqb{\texttt{B}}\rho''\}
\nonumber\\
\sqb{\texttt{S}}^{-\delta}&\triangleq&\Gfp{\subseteq}B^{-\delta}\sqb{\texttt{S}}\nonumber
\end{eqntabular}
The $\gfp$ is the intersection of the transfinite fixpoint iterations starting from all possible nonstandard backward traces $\mathbb{N}^{-\delta}\rightarrow\Sigma$. If some part of a trace has no seed state $\pair{\ell}{\rho}$ with $\mathcal{B}\sqb{\texttt{B}}\rho$ true, the trace is eliminated from the next fixpoint iterate so, at the limit by intersection of the fixpoint iterates, from the $\gfp$. So after one fixpoint iteration all parts of all nonstandard backward traces have a seed state $\pair{\ell}{\rho}$ where \texttt{B} holds. Then in the next fixpoint iteration, at least one seed state $\pair{\ell}{\rho}$ in the trace is required 
\begin{itemize}[topsep=0pt,itemsep=0pt]
\item[(\ref{eq:def:B-delta}.a)] either to be the last in the standard part before termination, in which case it is eliminated if it is not preceded by an execution of the loop body \texttt{Sb} from $\ell$ to $\ell$ with \texttt{B} finally false;
\item[(\ref{eq:def:B-delta}.b)] or, if not the last in the standard part to be preceded and followed by executions of the body \texttt{Sb} from $\ell$ to $\ell$ with \texttt{B} true.
\end{itemize}
It follows that each iteration ensures that the number of iterations around the seed is strictly increased. At the limit, the $\gfp$ which is the intersection of these transfinite fixpoint iterations must have all its standard and nonstandard parts built out of traces of a true test \texttt{B} followed by an execution of the loop body (\ref{eq:def:B-delta}.b), except for the standard part which must end with \texttt{B} false (\ref{eq:def:B-delta}.a). The difference with the standard semantics is that $B^{\ast}\sqb{\texttt{S}}$ in (\ref{eq:def:Bast}) requires successive loop iterations backward starting from the seed at the end of the trace while $B^{-\delta}\sqb{\texttt{S}}$ requires loop iterations before and after the seed that can be anywhere. For the standard part this will ultimately reach the end thanks to (\ref{eq:def:B-delta}.b) at which time termination is required by (\ref{eq:def:B-delta}.a). 

The nonstandard forward trace semantics is defined similarly as $\sqb{\texttt{S}}^{+\delta}\triangleq\Gfp{\subseteq}F^{+\delta}\sqb{\texttt{S}}$ where $F^{+\delta}\sqb{\texttt{S}}$ generalizes  $F^{\infty}\sqb{\texttt{S}}$ to  nonstandard traces. 

\begin{eqntabular}[fl]{l}
F^{+\delta}\sqb{\texttt{S}}X\colsep{\triangleq}\{\sigma_1\pair{\ell}{\rho}\sigma_2\in \textsf{Seed}^{+\delta}\sqb{\texttt{S}}X\mid \mathcal{B}\sqb{\texttt{B}}\rho{}\wedge{}\label{eq:def:F+delta}\stepcounter{equation}\nonumber\\
\quad
\bigl((\sigma_1=\epsilon)\vee(\exists\sigma'_1\pair{\ell}{\rho'}\pair{\ell'}{\rho'}\sigma''_1=\sigma_1\mathrel{.}\mathcal{B}\sqb{\texttt{B}}\rho'\wedge\pair{\ell'}{\rho'}\sigma''_1\pair{\ell}{\rho}\in\sqb{\texttt{Sb}}^{\ast})\bigr)\renumber{(\ref{eq:def:F+delta}.a)}\\
\quad\llap{${}\wedge{}$}\bigl((\exists\pair{\ell'}{\rho}\sigma'_2\pair{\ell}{\rho''}\sigma''_2=\sigma_2\mathrel{.}\pair{\ell'}{\rho}\sigma'_2\pair{\ell}{\rho''}\in\sqb{\texttt{Sb}}^{\ast}{}\wedge\mathcal{B}\sqb{\texttt{B}}\rho''){}\vee{}\renumber{(\ref{eq:def:F+delta}.b)}\\
\qquad(\exists\pair{\ell'}{\rho}\sigma_b=\sigma_2\mathrel{.} \mathcal{B}\sqb{\texttt{B}}\rho\wedge
\pair{\ell'}{\rho}\sigma_b\in \sqb{\texttt{Sb}}^{\infty}
)\bigr)
\}\renumber{(\ref{eq:def:F+delta}.c)}
\end{eqntabular}
Let $X$ be an iterate of $F^{+\delta}\sqb{\texttt{S}}$ (initially all nonstandard forward traces  $\mathbb{N}^{+\delta}\rightarrow\Sigma$). In the next fixpoint iterate 
$F^{+\delta}\sqb{\texttt{S}}X$ appear only the traces $\sigma_1\pair{\ell}{\rho}\sigma_2$ which have at least one seed state in all their parts and at least one of these seeds $\pair{\ell}{\rho}$ is, by (\ref{eq:def:F+delta}.a) either the first state in the trace or else preceded by a terminating  execution of the loop body \texttt{Sb}  starting with \texttt{B} true and, by (\ref{eq:def:F+delta}.b), followed by a terminating execution of the loop body \texttt{Sb}  ending with \texttt{B} true, or by (\ref{eq:def:F+delta}.c), followed by a nonterminating execution of the loop body \texttt{Sb} . (\ref{eq:def:F+delta}.a) and (\ref{eq:def:F+delta}.b) guarantee the existence of other seeds which at the next fixpoint iterate will be required to satisfy the same constraints. Passing to the limit by an intersection, after possibly transfinite fixpoint iterations, only infinite nonstandard loop traces will remain. In particular, it follows that $\sqb{\texttt{S}}^{+0}\triangleq\Gfp{\subseteq}F^{+0}\sqb{\texttt{S}}=\Gfp{\subseteq}F^{\infty}\sqb{\texttt{S}}\triangleq\sqb{\texttt{S}}^{\infty}$ in the standard case.

We also define
\begin{eqntabular*}{rcl}
\sqb{\texttt{S}}^{-\delta+\delta'}&\triangleq&\{\sigma\pair{\ell}{\rho}\sigma'\mid
\sigma\pair{\ell}{\rho}\in\Gfp{\subseteq}B^{-\delta}\sqb{\texttt{S}}\wedge
\pair{\ell}{\rho}\sigma'\in\Gfp{\subseteq}F^{+\delta'}\sqb{\texttt{S}}\}
\end{eqntabular*}
which has both  backward and forward infinite nonstandard traces.

The traditional trace semantics is $\sqb{\texttt{S}}^{\ast\infty}$ $\triangleq$
 $\sqb{\texttt{S}}^{\ast}\cup\sqb{\texttt{S}}^{\infty}$ =
$\sqb{\texttt{S}}^{\ast}\cup \sqb{\texttt{S}}^{+0}$ where $\sqb{\texttt{S}}^{\ast}$ is the classic finite semantics and $\sqb{\texttt{S}}^{\infty}=\sqb{\texttt{S}}^{+0}$ the standard infinite traces.

There are several possible choices for the nonstandard semantics, for example, $\sqb{\texttt{S}}^{\ast}\cup\sqb{\texttt{S}}^{-\delta}$ ignoring nontermination as in natural semantics \cite{DBLP:conf/stacs/Kahn87} (but keeping the traces that terminate but never start), $\sqb{\texttt{S}}^{\ast}\cup\sqb{\texttt{S}}^{-\delta}\cup\sqb{\texttt{S}}^{+\delta}$ or $\sqb{\texttt{S}}^{\widetilde{\infty}}$ $\triangleq$ $\sqb{\texttt{S}}^{\ast}\cup\sqb{\texttt{S}}^{-\delta}\cup\sqb{\texttt{S}}^{+\delta}\cup\sqb{\texttt{S}}^{-\delta+\delta}$. Then $\sqb{\texttt{while true do skip}}^{\widetilde{\infty}}=\sqb{\texttt{while true do skip}}^{+\delta}\cup\sqb{\texttt{while true do skip}}^{-\delta+\delta}$ since it either starts and never terminates or never starts and never terminates. 

\section{The Nonstandard Interpretation of the Axiomatic Semantics}\label{sec:Nonstandard:Interpretation}

Define, for all statements $\texttt{S}\in\mathcal{L}$, the abstraction of a nonstandard trace semantics
$T$ into the set of all Hoare triples valid for that semantics (again $\ell=\mathsf{at}(\texttt{S})$ and $\ell''=\mathsf{after}(\texttt{S})$ are global variables in the formula denoting program points).
\begin{eqntabular}[fl]{@{\qquad}rcl}
\widetilde{\alpha}_{\texttt{S}}(T)&\triangleq&\bigl\{\{P\}\texttt{S}\{Q\}\bigm| P,Q\in\wp(\mathbb{E})\wedge{}\nonumber\\
&&\renumber{$\begin{array}[t]{l}((\forall \pair{\ell}{\rho}\sigma\pair{\ell''}{\rho''}\in T\cap\mathbb{N}^{\ast}\rightarrow\Sigma\mathrel{.}
\rho\in P\Rightarrow \rho''\in Q{})\\
{}\vee{}(\forall \sigma\pair{\ell}{\rho}\sigma'\pair{\ell''}{\rho''}\in T\cap\mathbb{N}^{-\delta}\rightarrow\Sigma\mathrel{.}{}\\
\qquad\quad(\pair{\ell}{\rho}\sigma'\pair{\ell''}{\rho''}\in(\mathbb{N}^{-0}\rightarrow\Sigma)\wedge\rho\in P)\Rightarrow \rho''\in Q{}))\bigr\}
\end{array}$}
\end{eqntabular}
This is the standard requirement for finite traces. For backward nonstandard traces  that never start but terminate, their nonstandard part is ignored. If their standard part has a state $\pair{\ell}{\rho}$ satisfying $P$ then the final state must satisfy the postcondition $Q$. 
Notice that if auxiliary variables record the values of variables at $\ell$ in $P$ when $\ell$ is reached at any past loop iteration and these auxiliary variables appear in $Q$ then $Q$  must hold between these values of the auxiliary variables when $\ell$ was reached  and the final values of the program variables at $\ell''$.

The nonstandard Hoare logic theory $\widetilde{\mathcal{H}}(\mathcal{L})$ for  the language $\mathcal{L}$ is the set
\begin{eqntabular}{rcl}
\widetilde{\mathcal{H}}(\mathcal{L})&\triangleq&\bigcup\bigl\{\widetilde{\alpha}_{\texttt{S}}(\sqb{\texttt{S}}^{\widetilde{\infty}})\bigm|\texttt{S}\in\mathcal{L}\bigr\}\label{def:nonstandard:Hoare:logic:theory}
\end{eqntabular}

\section{Ambiguity of the Axiomatic Semantics}\label{sec:Ambiguity:Axiomatic:Semantics}
The standard $\mathcal{H}(\mathcal{L})$ and nonstandard $\widetilde{\mathcal{H}}(\mathcal{L})$ theories are the same.
\begin{theorem}\label{th:same:theories} $\widetilde{\mathcal{H}}(\mathcal{L})=\mathcal{H}(\mathcal{L})$
\end{theorem}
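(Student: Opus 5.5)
The plan is to prove the equality statement by statement: for every $\texttt{S}\in\mathcal{L}$ I will show $\widetilde{\alpha}_{\texttt{S}}(\sqb{\texttt{S}}^{\widetilde{\infty}})=\alpha_{\texttt{S}}(\sqb{\texttt{S}}^{\ast\infty})$, and then take the union over $\mathcal{L}$ in (\ref{def:standard:Hoare:logic:theory}) and (\ref{def:nonstandard:Hoare:logic:theory}). Write $\sqb{\texttt{S}}^{\widetilde{\infty}}=\sqb{\texttt{S}}^{\ast}\cup\sqb{\texttt{S}}^{-\delta}\cup\sqb{\texttt{S}}^{+\delta}\cup\sqb{\texttt{S}}^{-\delta+\delta}$. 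Only two parts matter for $\widetilde{\alpha}_{\texttt{S}}$: the finite traces, which are exactly $\sqb{\texttt{S}}^{\ast}$, and the backward traces in $\mathbb{N}^{-\delta}\rightarrow\Sigma$. The forward and backward/forward traces never reach $\ell''$, so they impose no constraint. The same holds for the infinite traces in $\alpha_{\texttt{S}}$. Hence the finite-trace condition is literally the same on both sides. What remains is to show that the constraint coming from the backward nonstandard traces in $\sqb{\texttt{S}}^{-\delta}$ neither adds triples nor removes any. I will read the disjunction in the definition of $\widetilde{\alpha}_{\texttt{S}}$ as it is intended, namely as requiring that both the finite and the backward conditions hold; if it is read literally as an ``or'', the nonstandard theory could only be larger, and the argument below still shows both conditions coincide.

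The key lemma is about the standard suffix of a backward trace. For every $\sigma\pair{\ell}{\rho}\sigma'\pair{\ell''}{\rho''}\in\sqb{\texttt{S}}^{-\delta}$ whose suffix $\pair{\ell}{\rho}\sigma'\pair{\ell''}{\rho''}$ is indexed by $\mathbb{N}^{-0}$, that suffix belongs to $\sqb{\texttt{S}}^{\ast}$. I will prove this by induction on the number of visits to $\ell$ in the finite standard suffix. The base case is (\ref{eq:def:B-delta}.a): the last seed is followed by $\neg\mathcal{B}\sqb{\texttt{B}}$ and exit, which matches the first set of $B^{\ast}\sqb{\texttt{S}}$ in (\ref{eq:def:Bast}). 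For the inductive step I use that the traces of $\Gfp{\subseteq}B^{-\delta}\sqb{\texttt{S}}$ are built from segments ``$\texttt{B}$ true, then a body execution in $\sqb{\texttt{Sb}}^{\ast}$'' by (\ref{eq:def:B-delta}.b). Prepending one such segment to a trace in $\sqb{\texttt{S}}^{\ast}$ gives a trace in $B^{\ast}\sqb{\texttt{S}}(\sqb{\texttt{S}}^{\ast})=\sqb{\texttt{S}}^{\ast}$. Consequently every triple in $\alpha_{\texttt{S}}(\sqb{\texttt{S}}^{\ast\infty})$ satisfies the backward condition, which gives $\mathcal{H}(\mathcal{L})\subseteq\widetilde{\mathcal{H}}(\mathcal{L})$. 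The reverse inclusion is immediate, because the finite condition is shared by both abstractions.

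For nested statements, the body semantics $\sqb{\texttt{Sb}}^{\ast}$ used in $B^{-\delta}$ is the standard one, so no induction on nonstandard body behaviour is needed. For non-loop statements, $\sqb{\texttt{S}}^{-\delta}$ contributes no traces ending at $\ell''$ outside nested loops. In a composition, the final standard suffix reduces to the finite semantics of the enclosing structure by the same argument applied to the innermost loop that the backward trace passes through. This will be handled by a routine structural induction.

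I expect the main obstacle to be making rigorous the informal description of $\Gfp{\subseteq}B^{-\delta}\sqb{\texttt{S}}$ as ``built out of body executions around every seed''. Concretely, I must show that from any seed in the standard part, repeated application of (\ref{eq:def:B-delta}.b) reaches the terminal seed covered by (\ref{eq:def:B-delta}.a) after finitely many steps. I will first try to use that the standard part is indexed by $\mathbb{N}^{-0}$: the distance to the end is a standard natural. Since $X=B^{-\delta}\sqb{\texttt{S}}X$ at the fixpoint, every seed in a trace of the gfp is again justified by (a) or (b). Standard induction on that distance then decomposes the suffix into body iterations, which completes the lemma.
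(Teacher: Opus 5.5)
Your overall route is the paper's. Its whole proof is your key lemma: the standard suffix $\pair{\ell}{\rho}\sigma'\pair{\ell''}{\rho''}$ of a backward trace lies in $\sqb{\texttt{S}}^{\ast}$, so validity on finite traces covers the backward condition. Under the conjunctive reading, your split of the two inclusions is the coherent one: $\widetilde{\mathcal{H}}(\mathcal{L})\subseteq\mathcal{H}(\mathcal{L})$ is trivial, and the lemma is what gives $\mathcal{H}(\mathcal{L})\subseteq\widetilde{\mathcal{H}}(\mathcal{L})$. The paper instead calls the latter inclusion obvious and attaches the lemma to the converse. However, the paper only asserts the lemma, relying on its informal account of $\Gfp{\subseteq}B^{-\delta}\sqb{\texttt{S}}$. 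Your attempt to derive it from the definition breaks exactly at the step you flagged.

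The inference ``since $X=B^{-\delta}\sqb{\texttt{S}}X$, every seed is again justified by (a) or (b)'' does not follow. A trace belongs to $B^{-\delta}\sqb{\texttt{S}}X$ as soon as one decomposition of shape (a) or (b) exists around one seed. Worse, (a) and (b) constrain only the trace itself, and $X$ enters only through $\textsf{Seed}^{-\delta}\sqb{\texttt{S}}X\subseteq X$. Hence $B^{-\delta}\sqb{\texttt{S}}X=X\cap C$ for a fixed set $C$, whose greatest fixpoint is $C$ itself, reached after one step. The strictly growing number of iterations around the seed described in the paper never materialises.

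In $C$, the prefix $\sigma_1$ in (a) is arbitrary apart from containing a seed in each part. Take \texttt{S} = \texttt{while x<0 do x:=x+1} and a trace in $\mathbb{N}^{-0}\rightarrow\Sigma$ made of a state $\pair{\ell}{\rho_5}$ with $\rho_5(\texttt{x})=5$ repeated backward forever, then one iteration from $-1$ to $0$ and the exit. Its suffix from the last $\pair{\ell}{\rho_5}$ is not in $\sqb{\texttt{S}}^{\ast}$. Under your conjunctive reading it would therefore reject $\{\texttt{x}=5\}\,\texttt{S}\,\{\texttt{x}=5\}$, which belongs to $\alpha_{\texttt{S}}(\sqb{\texttt{S}}^{\ast\infty})$. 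So no induction on the distance to the end can prove the lemma from the literal definition.

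You must take as the definition of $\sqb{\texttt{S}}^{-\delta}$ the characterisation the paper tacitly uses. Every part must be a concatenation of segments ``\texttt{B} true, then a body trace in $\sqb{\texttt{Sb}}^{\ast}$'', with the standard part ending in ``\texttt{B} false, exit''. Alternatively, change $B^{-\delta}\sqb{\texttt{S}}$ so that every $\ell$-state must be justified. With either fix, your induction on the number of $\ell$-visits in the finite standard suffix goes through.

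Your parenthetical about the literal $\vee$ is also wrong. Under that reading, any statement without backward traces receives every triple through the vacuous second disjunct. For example, \texttt{while false do skip} has no seeds, and so gets $\{\mathbb{E}\}\,\texttt{while false do skip}\,\{\emptyset\}$; hence $\widetilde{\mathcal{H}}(\mathcal{L})\supsetneq\mathcal{H}(\mathcal{L})$. Your lemma gives only ``finite condition implies backward condition'', not the converse. The conjunctive reading is therefore necessary for the theorem, not merely convenient.
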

\begin{proof}Obviously ${\mathcal{H}}(\mathcal{L})\subseteq\widetilde{\mathcal{H}}(\mathcal{L})$. The converse follows from the fact that if $\sigma\pair{\ell}{\rho}\sigma'\pair{\ell''}{\rho''}\in \sqb{\texttt{S}}^{\widetilde{\infty}}\cap\mathbb{N}^{-\delta}\rightarrow\Sigma$ and $\pair{\ell}{\rho}\sigma'\pair{\ell''}{\rho''}\in(\mathbb{N}^{-0}\rightarrow
\Sigma)$ then $\pair{\ell}{\rho}\sigma'\pair{\ell''}{\rho''}\in\sqb{\texttt{S}}^{\ast}$ together with $\rho\in P$ implies $\rho''\in Q$ and therefore $\{P\}\texttt{S}\{Q\}$ holds by definition of ${\alpha}_{\texttt{S}}(\sqb{\texttt{S}}^{\ast})$.
\qed\end{proof}
Since the theories are the same by theorem \ref{th:same:theories}, their proof rules are the same by structural induction on the language $\mathcal{L}$ and Aczel's  correspondence between theories and proof rules \cite{Aczel:1977:inductive-definitions} for iteration. It follows that the popular statement that ``Axiomatic semantics is a formal approach in computer science that defines a program's meaning by its effects on logical assertions (pre- and post-conditions), rather than how it executes'' is wrong since the defined meaning is not unique up to isomorphism, unlike trace semantics do. This is the same problem as for Peano's original definition of the naturals.

\begin{example}
Consider the program \texttt{while x<0 do x:=x+1;} on integers whose standard traces in $\mathbb{N}^{\ast\infty}\rightarrow\Sigma=\mathbb{N}{\ast}\rightarrow\Sigma\cup\mathbb{N}^{+0}\rightarrow\Sigma$ are isomorphic to $\{-n\ -n+1\ \ldots\ 0\mid n\in\mathbb{N}\}$ and $\{n\mid n\in\mathbb{N}\}$ since the program executions starting with a negative value $-n$ of variable \texttt{x} will iteratively increment it by one until reaching $0$ and then exiting the loop or else the program executions will start with a value $n$ of \texttt{x} which is greater than or equal to $0$, in which case the loop is not entered and immediately exited. 

The method to show that Hoare logic includes any of these traces is for each one to define $P$ to be the set of states along that trace and
to prove that $P$ is a loop inductive invariant according to Hoare's rule of iteration. 

For example, consider the trace $-n\ -n+1\ \ldots\ 0$ for a given $n\in\mathbb{N}$ where $\rho\in\Sigma$ such that $\rho(\texttt{x})=k$ is encoded simply by $k$.
Let $P=\{\rho\mid \rho(\texttt{x})\in\interval{-n}{0}\}$ be the set of states along that trace. We have 
$P\cap\{\rho\mid \rho(\texttt{x})<0\}$ = $\{\rho\in P\mid \rho(\texttt{x})<0\}$ = $\{\rho\mid \rho(\texttt{x})\in\interval{-n}{-1}\}$ and \{$\{\rho\mid \rho(\texttt{x})\in\interval{-n}{-1}\}$\}\texttt{x:=x+1;}\{$\{\rho\mid \rho(\texttt{x})\in\interval{-n+1}{0}\}$\} by the assignment rule. Moreover $\{\rho\mid \rho(\texttt{x})\in\interval{-n+1}{0}\}\subseteq P$ so that $\{P\cap\{\rho\mid \rho(\texttt{x})<0\}\}\,\texttt{x:=x+1;}\,\{P\}$ holds by the consequence rule. By the iteration rule, $\{P\}\,\texttt{while}\:\texttt{x<0}\:\texttt{do}\,:\texttt{x:=x+1;}\,\{\neg(x<0) \cap P\}$ proving that the trace $-n\ -n+1\ \ldots\ 0$ is feasible for that program. Similarly for any $n\in\mathbb{N}$ we can prove \{\{n\}\}\texttt{while x<0 do x:=x+1;}\{\{n\}\} (considering only the value $n$ of \texttt{X}), proving the feasibility of the traces $\{n\mid n\in\mathbb{N}\}$.

Now consider the traces in $\mathbb{N}^{-0}\rightarrow\Sigma$ so that the program semantics now include the trace $\ldots\ {-2}\ {-1}\ 0$ that never starts and stops at $0$. The invariant is now $P=\{\rho\mid \rho(\texttt{x})\in\mathbb{Z}\wedge \rho(\texttt{x})\leqslant 0\}$ which also satisfies Hoare's verification conditions for iteration so that this trace is also proved to be feasible. 

It follows that program \texttt{while x<0 do x:=x+1;} has the standard semantics $\{-n\ -n+1\ \ldots\ 0\mid n\in\mathbb{N}\}\cup\{n\mid n\in\mathbb{N}\}$ in $\mathbb{N}^{\ast\infty}\rightarrow\Sigma$ and the nonstandard semantics $\{\ldots\ {-2}\ {-1}\ 0\}\cup\{-n\ -n+1\ \ldots 0\mid\ n\in\mathbb{N}\}\cup\{n\mid n\in\mathbb{N}\}$ in $\mathbb{N}^{-0}\rightarrow\Sigma$. This shows that for some programs,  Hoare logic cannot distinguish between the standard trace semantics on $\mathbb{N}^{\ast\infty}\rightarrow\Sigma$ and the nonstandard one on $\mathbb{N}^{\ast}\rightarrow\Sigma\cup\mathbb{N}^{-0}\rightarrow\Sigma$.

For the converse question, fix the semantics to be either on the standard $\Sigma^{\ast\infty}$ or on the nonstandard $\Sigma^{\ast}\cup\Sigma^{-0}$.
The question is whether there is a triple $\{P\}\,\texttt{S}\,\{Q\}$ allowing us to know which semantics is considered. This cannot be on finite executions and must be on infinite or transfinite executions, the only ones making a difference. This difference can only appear in the loop invariant in $\wp(\mathbb{E})$ for infinite trace. If this loop invariant is expressed in first-order logic then, by the compactness theorem of first-order logic, it is true for $\Sigma$ isomorphic to $\mathbb{N}$ if and only if it is true for $\Sigma$ isomorphic to nonstandard naturals  so no distinction is possible in this case. More generally, if the invariant holds for a transfinite trace, it must hold for its standard part, hence for the standard semantics (in which case it is an over-approximation, possibly referring to nonstandard naturals, up to an isomorphism with $\Sigma$). 
\end{example}
\begin{example}
Consider now the program \texttt{while true do x:=x+1;} which traces in $\Sigma^{\infty}=\Sigma^{+0}$ are $\{n\ n+1\ n+2\ \ldots \mid n\in\mathbb{Z}\}$ and traces on $\Sigma^{-0}\cup\Sigma^{+0}$ also contain the trace $\ldots\ -2\ -1\ 0\ 1\ 2\ \ldots$. Again for all of these traces the set of their states is a valid invariant in Hoare logic so that Hoare logic cannot distinguish
between the standard semantics on $\Sigma^{\ast\infty}$ and the nonstandard semantics on $\Sigma^{\ast-0+0}$.
\end{example}

\section{A Reminder on Termination Proof Methods}\label{sec:Termination:Proof:Methods}

Turing/Floyd termination proof method \cite{Turing49-program-proof,Floyd67-1} consists in exhibiting a variant function mapping program variables into a well-founded set and showing that its value strictly decreases on each loop iteration (or after a finite number of iterations).

Knuth's termination proof method \cite[Section 3a]{DBLP:journals/csur/Knuth74} consists in adding an integer counter to each loop which is initialized, e.g.\ to $0$, before entering the loop and incremented by a strictly positive value, e.g.\ by $1$,  within the loop body (the counter is assumed to be a mathematical integer so the incrementation cannot overflow).
If, thanks to the invariants statically relating the counter to the other variables in the loop, the counter value can be proved to be upper-bounded by $M\in\mathbb{N}$ then termination in finitely many steps is proved. The method is sound but incomplete, e.g.\ for unbounded non-determinism requiring transfinite ordinals \cite{CousotR85-1}.

This method was used in \cite[section 8.3]{Halbwachs79-these} to prove termination of programs by linear inequalities/polyhedral static analysis \cite{DBLP:conf/popl/CousotH78}. 

In the same way as Manna and Pnueli extended Hoare logic to total correctness, i.e.\ partial correctness with termination \cite{DBLP:journals/acta/MannaP74,DBLP:journals/jacm/AptP86}, it is easy to extend Hoare logic with Knuth's termination proof method \cite[section 2]{DBLP:journals/acta/KatzM75} ($t$, $\alpha$, and $M$ are auxiliary variables in $\mathbb{N}$ that do not appear in $P$, \texttt{B}, and \texttt{Sb}).

\begin{eqntabular}{c}
\frac{~P\wedge\texttt{B}\Rightarrow{t}<M,\quad \{P\wedge \texttt{B}\wedge t=\alpha\}\,\texttt{Sb}\,\{P\wedge t=\alpha+1\}~}{\{P\wedge t=0\}\,\texttt{while B do Sb}\,\{P\wedge\neg\texttt{B}\}}
\label{eq:Knuth:rule}
\end{eqntabular}

\section{Standardisation of the Axiomatic Semantics}\label{sect:logic:standardisation}

We now extend the axiomatic semantics so as to exclude nonstandard interpretations and show in section \ref{sec:StandardHoareLogics} that this does not change Hoare's proof method under the hypothesis that the logic is interpreted in the standard trace semantics of section \ref{sect:standard:trace:semantics}. In fact this hypothesis is implicit in most uses of Hoare logic.

For naturals, considering the least model eliminates nonstandard models. A similar idea would be to consider the strongest loop invariant. But this does not help, e.g.\ \texttt{x:=0; while true do skip;} has strongest invariant $\texttt{x}=0$ which does not exclude forward nonstandard traces. Anyway, Hoare logic cannot check that a given invariant is the strongest possible one.

\subsection{Preventing Nonstandard Infinite Forward Traces}\label{sec:no:nonstandard:forward:traces}

To prevent the nonstandard forward traces of section \ref{sec:nonstandard:Traces}, we extend Knuth's method by taking the counter to be an ordinal which is initialized to $0$ before the loop, and the incrementation within the loop is the ordinal addition of 1 to the least upper bound of all previous values of the counter. It follows that for a standard infinite forward iteration the counter is bounded by $\omega$ where $\omega$ is the smallest infinite limit ordinal. However, for a nonstandard infinite forward iteration the counter should be incremented by one at each step in the nonstandard part which is impossible since there is no order isomorphism between integers and ordinals. So adding this further proof obligation in Hoare's logic for loops will guarantee that the counter for that loop is upper bounded by $\omega$ which will definitely exclude nonstandard forward infinite traces.

Knuth's auxiliary counter is equivalent to the existence of a variant function $K$ recording the value of the counter over time. Let $\mathbb{O}$ be the class of Von Neumann ordinals \cite{VonNeumann-ordinals-1923}, $\oplus$ be ordinal addition,  and $\omega$ be the first infinite limit ordinal. The following requirement limits loop iterations to the standard first  $\omega$ ones.
\begin{eqntabular}{r}
\exists K\in \mathbb{N}^{+\delta}\rightarrow\mathbb{O}\ \textrm{increasing}\ \mathrel{.}\forall n\in\mathbb{N}^{+\delta}\mathrel{.} P\wedge \texttt{B}\Rightarrow K(n)<\omega{}\wedge{}\qquad\label{eq:no:nonstandard:forward:traces}\\
\forall\alpha\in\mathbb{O}\mathrel{.}\{P\wedge\texttt{B}\wedge K(n)=\alpha\}\,\texttt{Sb}\,\{P\wedge K(n+1)=\alpha\oplus 1\}\nonumber
\end{eqntabular}
In (\ref{eq:Knuth:rule}), $t$ is the number of iterations so far in the loop. It is replaced in 
(\ref{eq:no:nonstandard:forward:traces}) by $K(n)$ which maps the position $n\in\mathbb{N}^{+\delta}$ in the trace to the number of iterations so far in the loop.
So $n$ should be initialized to zero in $\{P\wedge n=0\}\,\texttt{while B do Sb}\,\{P\wedge\neg\texttt{B}\}$ and incremented after each test or assignment e.g. $\{P\}\,\texttt{x := f}\,\{\rho[\texttt{x}\leftarrow \sqb{\texttt{f}}(\rho),n\leftarrow \rho(n)+1]\mid \rho\in P{}\}$ where the incrementation of $\rho(n)$ on $\mathbb{N}^{+\delta}$ is nonstandard.

$K$ must be increasing (monotone) and there exists no order-isomorphism between nonstandard integer parts and 
ordinals since ordinals are well-founded and integers admit infinite strictly decreasing chains. So for $\delta>0$, $K$ does not exist for nonstandard parts and the requirement can only be satisfied for $\delta=0$ that is $K\in \mathbb{N}^{+0}\rightarrow \omega+1$. For the classic traces of section \ref{sect:standard:trace:semantics} where $\delta=0$ and $\mathbb{N}^{+0}=\mathbb{N}$, one can choose $K(n)=n$ (which is $t$ in (\ref{eq:Knuth:rule})) so that $\bigcup_{n\in\mathbb{N}^{+0}}K(n)\leqslant\omega$.

\subsection{Preventing Nonstandard Infinite Backward or Backward/Forward traces}
The idea is that traversing standard traces backward should terminate, which is the inverse of the requirement (\ref{eq:no:nonstandard:forward:traces}) that forward traversal of forward traces should terminate when program termination is required (that is $\bigcup_{n\in\mathbb{N}^{+0}}K(n)<\omega$). It follows that we can exclude transfinite nonstandard backward traces by an additional proof obligation applying the inverse of Knuth's  proof method. 
\begin{eqntabular}{l}
\exists K\in \mathbb{N}^{-\delta}\rightarrow\mathbb{O}\ \textrm{decreasing (antitone)}\ \mathrel{.}\forall n\in\mathbb{N}^{-\delta}\mathrel{.}\forall\alpha\in\mathbb{O}\label{eq:no:nonstandard:backward:traces}\\
\quad\{P\wedge\texttt{B}\wedge K(n)=\alpha\oplus 1\}\,\texttt{Sb}\,\{P\wedge K(n+1)=\alpha\}
\wedge{}\nonumber\\
\qquad (P\wedge K(n)=0)\Rightarrow\neg\texttt{B}\wedge\bigcup_{n\in\mathbb{N}^{-0}}K(n)<\omega\nonumber
\end{eqntabular}
For example, for a backward nonstandard trace with domain $\mathbb{N}^{-1}$, we would have
\begin{eqntabular*}[fl]{@{\qquad}c}
\underbrace{\ldots\ \begin{array}[t]{c}\begin{array}[b]{c}\texttt{B}\\-2_{-1}\end{array}\\\vphantom{2}\end{array}\ \begin{array}[b]{c}\texttt{B}\\-1_{-1}\end{array}\ \begin{array}[b]{c}\texttt{B}\\0_{-1}\end{array}\ \begin{array}[b]{c}\texttt{B}\\1_{-1}\end{array}\ \begin{array}[b]{c}\texttt{B}\\2_{-1}\end{array}\ \ldots}_{K\textrm{ \scriptsize not definable for\ }\delta>0}\ \underbrace{\begin{array}[t]{c}\ldots\\\ldots\end{array}\mbox{$\begin{array}[t]{c}\begin{array}[b]{c}\texttt{B}\\-2_0\end{array}\\2\end{array}$} \mbox{$\begin{array}[t]{c}\begin{array}[b]{c}\texttt{B}\\-1_0\end{array}\\1\end{array}$}\ \begin{array}[t]{c}\begin{array}[b]{c}\neg\texttt{B}\\0_0\end{array}\rlap{\qquad$\leftarrow\ n\in\mathbb{N}^{-1}$}\\0\rlap{\qquad$\leftarrow\ K(n)\in\mathbb{O}$}\end{array}}_{K \textrm{ \scriptsize not bounded}}
\end{eqntabular*}
so that (\ref{eq:no:nonstandard:backward:traces}) does not hold. Requirement (\ref{eq:no:nonstandard:backward:traces}) will also exclude nonstandard infinite backward/forward traces. For the standard trace semantics there is no infinite backward trace so the requirement (\ref{eq:no:nonstandard:backward:traces}) only applies to finite traces, in which case, for deterministic programs, $K(n)$ is the number of remaining steps after $n$ previous execution steps. For unbounded nondeterminism, the bound $\omega$ must be replaced by a larger ordinal depending on the cardinality of $\Sigma$ \cite[section 10, p.\@ 292]{CousotCousot85-AMS}

\section{Standard Hoare Logics}\label{sec:StandardHoareLogics}
We have enriched the axiomatic semantics with two additional proof obligations eliminating nonstandard infinite traces. Should these additional requirements be included in proofs by Hoare logic? The answer is no provided Hoare logic is explicitly specified to be interpreted in the standard trace semantics. In that case the additional verification conditions of section \ref{sect:logic:standardisation} are always satisfied. This hypothesis is usually not specified in the definition of axiomatic semantics by Hoare logic, and so almost always remains completely implicit. This is the case when Hoare logic and its variants are designed by abstraction
of the standard trace semantics
\cite{DBLP:journals/pacmpl/Cousot24,DBLP:journals/pacmpl/VerschtK25}.

But there are exceptions. For example, in Event-B  the traces are explicitly specified to be the standard ones \cite[Ch.\ 14.3]{Abrial:Event-B} and machines are specified by defining the states and invariants satisfying proof obligations which are essentially Hoare proof rules \cite{Cousot-Abrial-2026}. So specifications in Event-B, although based on an axiomatic semantics,  cannot have nonstandard interpretations.

\section{Static Analysis by Abstraction of the Axiomatic Semantics}
The axiomatic semantics has been used to prove the soundness of static analyses (e.g.\ \cite{DBLP:conf/cpp/Appel11}) or typing algorithms (e.g.\ \cite{DBLP:journals/jfp/NanevskiMB08,DBLP:conf/csfw/BeringerH07,DBLP:conf/csfw/BartheDR04}). However, it is not very well adapted for static analyzers based on program transformations for which Hoare logic extensions must be used \cite{DBLP:conf/popl/Gerhart75,DBLP:conf/popl/Benton04}. For example Astrée's loop unrolling \cite{DBLP:conf/pldi/BlanchetCCFMMMR03} is better explained by Burstall's proof method \cite{DBLP:conf/ifip/Burstall74,Cousot-RR-LRIM-83-04-sep-1983} or by cyclic proofs in Hoare logic \cite{DBLP:journals/corr/abs-2504-14283}. Of course all these logics have nonstandard interpretations that can be excluded by the standardisation methods proposed in section \ref{sect:logic:standardisation}.

\section{Abstract Hoare logics}
Abstract Hoare logics have been introduced in \cite{DBLP:conf/oopsla/CousotCLB12} with the idea that state properties in $\wp(\mathbb{E})$ in Hoare triples, can be replaced by abstract properties chosen in an abstract domain $\pair{L}{\sqsubseteq}$ whose meaning is given by a Galois connection $\pair{\wp(\mathbb{E})}{\subseteq}
\galois{\alpha}{\gamma}\pair{L}{\sqsubseteq}$ \cite[Ch.\ 11]{DBLP:books/mit/C2021}. If $\gamma$ is increasing but not join-preserving then the abstract Hoare rules may be sound while the intersection rule is incorrect \cite[p.\ 219]{DBLP:conf/oopsla/CousotCLB12}. Even though the concrete Hoare logic may be standard, the abstraction may introduce nonstandard executions. The standardisation methods of section \ref{sect:logic:standardisation} may not be applicable when the abstract domain $\pair{L}{\sqsubseteq}$ is not expressive enough.

\section{Related work}Another nonstandard transfinite trace semantics in $\wp(\mathbb{O}\rightarrow\Sigma)$ where $\mathbb{O}$ is the class of ordinals
suggested in \cite[footnote 17, page 53]{DBLP:journals/tcs/Cousot02} and fully developed in \cite{DBLP:journals/lisp/GiacobazziM03} yields another nonstandard semantic interpretation of axiomatic semantics (it is different from the nonstandard semantics of section \ref{sec:nonstandard:Traces} since $\mathbb{O}$ is not a Skolem model of Peano's naturals in which any $n\neq 0$ must have a predecessor, which is not the case for limit ordinals). Such transfinite forward traces can also be eliminated by Knuth's ordinal method of section \ref{sec:no:nonstandard:forward:traces}.

\section{Conclusion and Future Work}
Hoare postulated the logic named after him and relied implicitly on a model of computation which defines possible computations of programs. When later proved sound and complete, the same implicit model of computation was used \cite{DBLP:journals/siamcomp/Cook78,DBLP:journals/siamcomp/Cook81}. The design of logic by abstract interpretation of a trace semantics \cite{DBLP:conf/birthday/Cousot26,DBLP:journals/pacmpl/Cousot24} makes the same hypothesis. So nonstandard computation models were never considered. 

A solution to avoid nonstandard models of computation is to make explicit the assumption that the axiomatic semantics should be interpreted by standard computation models only, as explicitly specified in Event-B. Another solution is to add additional requirements (\ref{eq:no:nonstandard:forward:traces}) and (\ref{eq:no:nonstandard:backward:traces}) to restrict the axiomatic semantics to the standard computation model. Remarkably these requirements (\ref{eq:no:nonstandard:forward:traces}) and (\ref{eq:no:nonstandard:backward:traces}) need not be checked in proofs by Hoare logic since they are always satisfied in the standard model.

We anticipate that reverse Hoare logics \cite{DBLP:conf/sefm/VriesK11,DBLP:journals/pacmpl/OHearn20} can also be given nonstandard interpretations and restricted to the standard interpretation using Knuth's ordinal methods (\ref{eq:no:nonstandard:forward:traces}) and (\ref{eq:no:nonstandard:backward:traces}).

\subsubsection*{Acknowledgements.}
I thank the organizers of \href{https://helmutfest.github.io}{Helmut Seidel Fest} (TU München, Garching) where I first presented this work on July 16, 2026 and the reviewers for their suggestions. I used Claude (Opus 5) to check for misprints and grammatical errors and search for the bibliography's DOIs.

\bibliographystyle{splncs04}
\bibliography{bib}

\end{document}